\newtheorem{theorem}{Theorem}[section]
\newtheorem{proposition}{Proposition}[section]
\begin{document}

\title{Effect of Energy Harvesting on Stable Throughput in Cooperative Relay Systems}

\author{\IEEEauthorblockN{Nikolaos Pappas, Marios Kountouris, Jeongho Jeon Anthony Ephremides, Apostolos Traganitis\thanks{N. Pappas is with the Department of Science and Technology, Link\"{o}ping University, Norrk\"{o}ping SE-60174, Sweden (e-mail: nikolaos.pappas@liu.se).
M. Kountouris is with the Mathematical and Algorithmic Sciences Lab, France Research Center, Huawei Technologies Co. Ltd. (e-mail: marios.kountouris@supelec.fr).
J. Jeon is with Intel Corporation, Santa Clara, CA 95054 USA (email: jeongho.jeon@gmail.com)
A. Ephremides is with the Department of Electrical and Computer Engineering and Institute for Systems Research, University of Maryland, College Park, MD 20742 (e-mail: etony@umd.edu).
A. Traganitis is with the Computer Science Department, University of Crete, Greece and Institute of Computer Science, Foundation for Research and Technology - Hellas (FORTH) (e-mail: tragani@ics.forth.gr).}
}
\thanks{This work has been partially supported by the People Programme (Marie Curie Actions) of the European Union's Seventh Framework Programme FP7/2007-2013/ under REA grant agreement no.[612361] -- SOrBet and by the MURI grant W911NF-08-1-0238, NSF grant CCF-0728966, ONR grant N000141110127.}
\thanks{This work was presented in part in the 1st IEEE Global Conference on Signal and Information Processing (GlobalSIP) 2013 \cite{b:PappasGlobalsip2013}.}
}

\maketitle

\begin{abstract}
In this paper, the impact of energy constraints on a two-hop network with a source, a relay and a destination under random medium access is studied. A collision channel with erasures is considered, and the source and the relay nodes have energy harvesting capabilities and an unlimited battery to store the harvested energy. Additionally, the source and the relay node have external traffic arrivals and the relay forwards a fraction of the source node's traffic to the destination; the cooperation is performed at the network level. An inner and an outer bound of the stability region for a given transmission probability vector are obtained. Then, the closure of the inner and the outer bound is obtained separately and they turn out to be identical. 
This work is not only a step in connecting information theory and networking, by studying the maximum stable throughput region metric
but also it taps the relatively unexplored and important domain of energy harvesting and assesses the effect of that on this important measure.
\end{abstract}

\IEEEpeerreviewmaketitle

\section{Introduction}

\label{sec:ECOOP_intro}

Taking advantage of renewable energy resources from the environment, also known as energy harvesting, enables unattended operability of infrastructure-less wireless networks. There are various forms of energy that can be harvested, including thermal, solar, acoustic, wind, and even ambient radio power \cite{paradiso:energy}. Energy harvesting is recently seen as a promising feature for wireless networks regarding self-sustainability and also efficiency. This permits long-term operation of distributed wireless communication systems, such as sensor networks, without the need for regular maintenance. However, the additional functionality of energy harvesting in wireless networks introduces several changes and calls for assessment of the system long-term performance such as in terms of the throughput and stability. The ideal scenario is to make the energy limitations transparent to the network.

Among distributed communication protocols, we are particularly interested in ALOHA, a simple random access scheme in which transmission attempts are performed randomly, independently, and distributively \cite{abramson:aloha}. In \cite{jeon:isitstability}, the capability of energy harvesting was first introduced in the analysis of the slotted ALOHA for a simple setting as an initial step to understand its impact on the achievable stability region. Recently, this result has been generalized in \cite{jeon:stability} by taking into account the multi-packet reception capability at the receiver and finite capacity batteries at the energy harvesting sources. In \cite{b:Pappas-JCN}, a cognitive access protocol was studied for the scenario where the higher priority primary source is powered by harvesting energy whereas the lower priority secondary source is assumed to have a reliable power supply.

Cooperative communication is one of key technologies to achieve coverage extension and throughput enhancement in wireless networks \cite{b:Yates-NOW}. In this work, we consider packet-level cooperation rather at the physical layer, in which a relay node takes responsibility of packet delivery for those it could overhear and successfully decode from the transmissions by source node \cite{b:Sadek, b:Rong1, b:Pappas-ISIT}. A key difference between physical-layer and network-layer cooperation is that the latter can capture the bursty nature of traffic. The impact of network-level cooperation in an energy harvesting network with a pure relay (without its own traffic) under scheduled access (time division multiple access in a controlled manner) was studied in \cite{b:Krikidis_Energy_Harv}.

A major limitation of Information Theory is the inability to handle bursty traffic and queuing delay. In the communications networks bursty traffic and
delay are central and indispensable concepts. The information theoretic capacity region is derived under the assumption of saturated queues. However, under stochastic and bursty traffic arrivals, the maximum stable throughput or stability region becomes a meaningful and relevant measure of rates in packets per slot in wireless networks. Thus, the maximum stable throughput is an important performance measure, akin to information theoretic capacity, but simpler to track and analyze and more appropriate for systems with sources that generate signals randomly in time. Understanding the relationship between information-theoretic capacity and stability region has received considerable attention in recent years and some progress has been made primarily for multiple access channels \cite{b:EphremidesHajekUnion}.

The characterization of random access stability for bursty traffic is a challenging problem even without energy harvesting \cite{tsybakov:ergodicity, rao:stability, Szpankowski:stability}. Additionally, for a network with more than three users (interacting queues), the exact characterization of the stability region is not known. This is because each node transmits and, thereby, interferes with the others only when its queue is non-empty. Such queues are said to be \textit{interacting} with each other in the sense that the service process of one depends on the status of the others. The analysis for the case with energy harvesting becomes significantly more challenging because the service process of a node depends not only on the status of its own queue and battery, but also on the status of the other node's queue and battery.

In this paper, we study the impact of energy constraints on a two-hop network with a source, a relay and a destination under random medium access as shown in Fig.~\ref{fig:ECOOP_model}. We assume a collision channel with erasures. Both the source and the relay node have external traffic arrivals. The relay forwards a fraction of the source node's traffic to the destination and the cooperation is performed at the network level. In addition, both source and relay nodes have energy harvesting capabilities and an unlimited battery to store the harvested energy. We provide necessary and sufficient conditions for the stability of the considered network as shown in Fig.~\ref{fig:ECOOP_model}. We first obtain an inner and an outer bound of the stability region for a given transmission probability vector. We then take the closure of the inner and the outer bound separately over all feasible transmission probability vectors. Interestingly, it turns out that the bounds are tight in terms of the closure, as also stated in \cite{jeon:stability}. This study provides insights on designing a relay-assisted network under energy constraints. When the aggregate charging rate is above one and the source and the relay lie in the intermediate traffic regime, the system has identical performance with that of a network without energy constraints, meaning that in that regime the energy limitations are transparent to the network operation. In this paper we focus on a simple network, as mentioned earlier, more realistic and complex systems are impossible to analyze, primarily due to the difficulty in tracking interacting queue. However, insights can still be obtained, even from simple models. This work provides a step in connecting information theory and networking, by studying the maximum stable throughput region metric. Further, it taps the relatively unexplored and important domain of energy harvesting and assesses the effect of that on this important measure.

The rest of this paper is organized as follows. In Section~\ref{sec:ECOOP_model}, we define the stability region, describe the channel model, and explain the packet arrival and energy harvesting models. In Section~\ref{sec:ECOOP_main}, we present inner and outer bounds on the stability region as well as the closure of the stability region. The proofs of our results are given in~\ref{sec:ECOOP_analysis} and \ref{sec:ECOOP_closure}. Finally, we conclude our work in Section~\ref{sec:ECOOP_conclusion}.

\section{System Model}\label{sec:ECOOP_model}
We consider a time-slotted system in which the nodes randomly access a common receiver and both source and relay nodes are powered from randomly time-varying renewable energy sources, as shown in Fig.~\ref{fig:ECOOP_model}. Each node stores the harvested energy in a battery of unlimited capacity. We denote with $S$, $R$, and $D$, the source, the relay and the destination, respectively. Packet traffic originates from both $S$ and $R$, and because of the wireless broadcast nature, $R$ may receive some of the packets transmitted from $S$, which in turn can be relayed to $D$. The packets from $S$ that fail to be received by $D$ but are successfully received by $R$ are relayed by $R$. A half-duplex constraint is imposed here, i.e. $R$ can overhear $S$ only when it is idle.

Each node has an infinite size buffer for storing incoming packets and the transmission of each packet occupies one time slot. Node $R$ has separate queues for the exogenous arrivals and the endogenous arrivals being relayed through $R$. Nevertheless, we can let $R$ have a single queue and merge all arrivals into a single queue as the achievable stable throughput region is not affected~\cite{b:Rong3}. This is due to the fact that the link quality between $R$ and $D$ is independent of which packet is selected for transmission.

The packet arrival and energy harvesting processes at $S$ and $R$ are assumed to be Bernoulli with rates $\lambda_S$, $\delta_S$ and $\lambda_R$, $\delta_R$, respectively, and are independent of each other.
$Q_i$ and $B_i$, $i=S,R$, denote the steady state number of packets and energy units in the queue and the energy source at node $i$, respectively. Furthermore, a node $i$ is called active if both its packet queue and its battery are nonempty at the same time, which is denoted by the event $\mathcal{A}_i = \{\{ B_i \neq 0 \} \cap \{ Q_i \neq 0 \}\}$ and idle otherwise (denoted by $\overline{\mathcal{A}_i}$).
In each time slot, nodes $S$ and $R$ attempt to transmit with probabilities $q_S$ and $q_R$, respectively, whenever they are active.
Decisions on transmission are made independently among the nodes and each transmission consumes one energy unit. We assume a collision channel with erasures in which if both $S$ and $R$ transmit at the same time slot, a collision occurs and both transmissions fail. The probability that a packet transmitted by node $i$ is successfully decoded at node $j (\neq i)$ is denoted by $p_{ij}$, which is the probability that the signal-to-noise ratio (SNR) over the specified link exceeds a certain threshold for successful decoding. These erasure/outage probabilities capture the effect of random fading at the physical layer. The probabilities $p_{SD}$, $p_{RD}$, and $p_{SR}$ denote the success probabilities over the link $S-D$, $R-D$, and $S-R$, respectively. We also assume that node $R$ has a better channel to $D$ than $S$, i.e. $p_{RD} > p_{SD}$.

The cooperation is performed at the protocol (network) level as follows: when $S$ transmits a packet, if $D$ decodes it successfully, it sends an ACK and the packet exits the network; if $D$ fails to decode the packet but $R$ does, then $R$ sends an ACK and takes over the responsibility of delivering the packet to $D$ by placing it in its queue. If neither $D$ nor $R$ decode (or if $R$ does not store the packet), the packet remains in $S$'s queue for retransmission. The ACKs are assumed to be error-free, instantaneous, and broadcasted to all relevant nodes.

The average service rate for the source node is given by
\begin{equation} \label{eqn:service_rate_S}
\begin{aligned}
\mu_S =  \left[ q_S (1-q_R) \mathrm{Pr}\left(B_S \neq 0, \mathcal{A}_R \right) +q_S \mathrm{Pr}(B_S \neq 0,\overline{\mathcal{A}_R} ) \right]
\times \left[p_{SD}+(1-p_{SD})p_{SR} \right],
\end{aligned}
\end{equation}
and for the relay is given by
\begin{equation} \label{eqn:service_rate_R}
\mu_R= \left[q_R (1-q_S) \mathrm{Pr}\left( B_R \neq 0, \mathcal{A}_S \right)+ q_R \mathrm{Pr}(B_R \neq 0,\overline{\mathcal{A}_S} ) \right]\times p_{RD}.
\end{equation}

Denote by $Q_i^t$ the length of queue $i$ at the beginning of time slot $t$. Based on the definition in~\cite{Szpankowski:stability}, the queue is said to be \emph{stable} if
\begin{equation*}\label{eqn:PC_definition_stability}
    \lim_{t \rightarrow \infty} {Pr}[Q_i^t < {x}] = F(x)  \text{ and } \lim_{ {x} \rightarrow \infty} F(x) = 1
\end{equation*}
Loynes' theorem~\cite{b:Loynes} states that if the arrival and service processes of a queue are strictly jointly stationary and the average arrival rate is less than the average service rate, then the queue is stable. If the average arrival rate is greater than the average service rate, then the queue is unstable and the value of $Q_i^t$ approaches infinity almost surely. The stability region of the system is defined as the set of arrival rate vectors $\boldsymbol{\lambda}=(\lambda_1, \lambda_2)$ for which the queues in the system are stable.

\begin{figure}[t]
\centering
\includegraphics[scale=0.8]{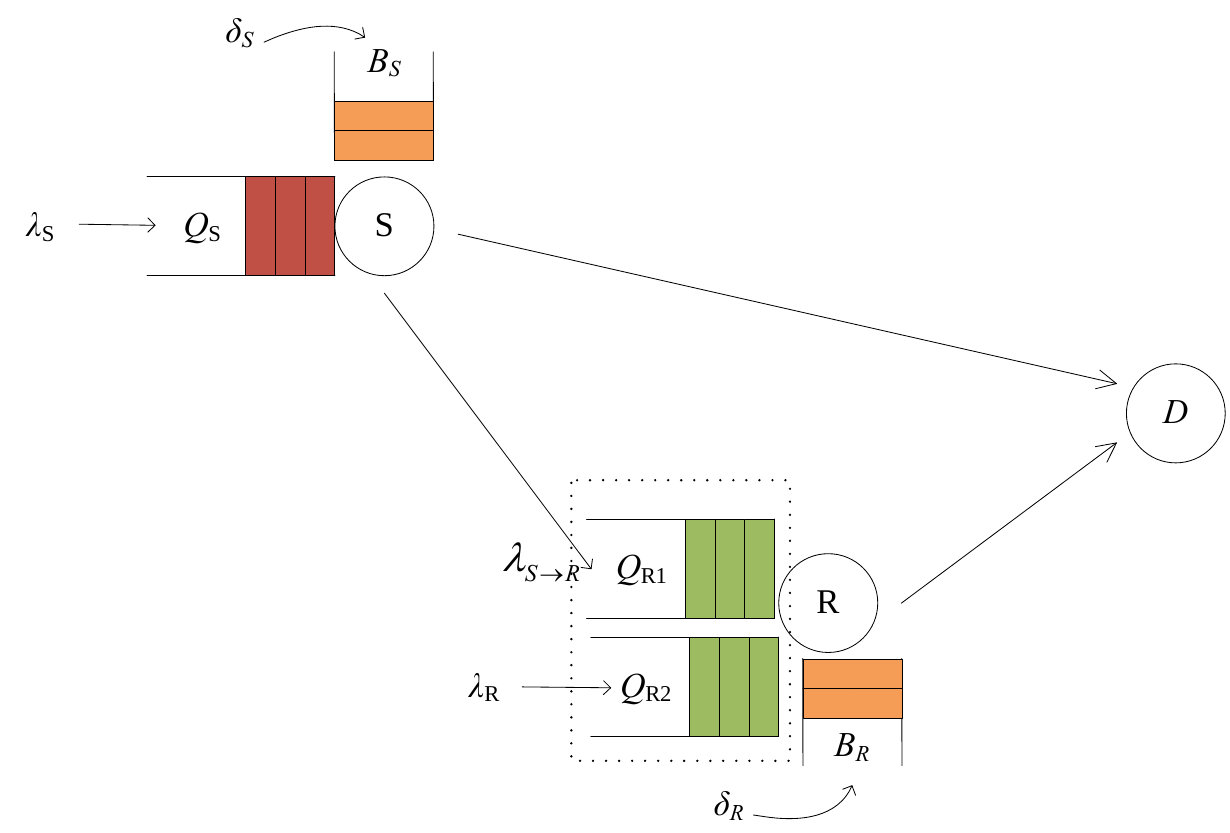}
\caption{A relay-aided wireless network with energy harvesting capabilities.}
\label{fig:ECOOP_model}
\end{figure}

\section{Main Results}\label{sec:ECOOP_main}

This section presents the stability conditions of a network consisting of a source and a relay both having energy harvesting capabilities, and a destination, as depicted in Fig.~\ref{fig:ECOOP_model}.
The source and the relay are assumed to have infinite size queues to store the harvested energy.

The next proposition presents an inner bound on the stability region by providing sufficient conditions for stability.

\begin{proposition} \label{prop:sufficient_conditions}
If $(\lambda_S, \lambda_R) \in \mathcal{R}_{inner}$, with 

\begin{align} \label{eqn:R_inner}
\mathcal{R}_{inner} = \left\{ (\lambda_{S},\lambda_{R}) :  \lambda_S < \min \left(\delta_S ,q_S \right) \left[1 - \min \left(\delta_R ,q_R \right) \right]  \left[p_{SD}+(1-p_{SD})p_{SR} \right] , \right. \notag \\
\left. \lambda_{R}+ \frac{(1-p_{SD})p_{SR}} {p_{SD}+(1-p_{SD})p_{SR}}\lambda_{S}< \min \left(\delta_R ,q_R \right) \left[1 - \min \left(\delta_S ,q_S \right) \right] p_{RD} \right\},
\end{align}
then the network in Fig.~\ref{fig:ECOOP_model} is stable.
\end{proposition}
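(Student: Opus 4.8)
The plan is to combine the \emph{stochastic dominance} technique with Loynes' theorem. First I would construct a dominant system in which both $S$ and $R$ transmit \emph{dummy} packets whenever their packet queues are empty (they still require energy and still flip their access coins), so that each node behaves as if its packet queue were permanently backlogged; all other mechanisms (energy arrivals, packet arrivals, access coins, and channel/erasure outcomes) are coupled slot-by-slot to those of the original system, and dummy packets are discarded at the receivers and never relayed. I would then verify the sample-path orderings $Q_i^{\mathrm{dom}}(t)\ge Q_i(t)$ and $B_i^{\mathrm{dom}}(t)\le B_i(t)$ for $i\in\{S,R\}$ by induction on $t$: saturating a node can only increase the number of slots in which it contends, which lowers the service of the other packet queue, while the extra transmissions can only empty a battery sooner; both effects retard real-packet departures, so the dominant queues stay no shorter than the original ones. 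Consequently, stability of the dominant system implies stability of the original one, and it suffices to stabilize the former.

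The second step is to exploit the decoupling created by saturation. A battery is depleted exactly when its node flips its coin and has energy, an event that in the dominant system no longer depends on the packet-queue content, so the batteries become independent queues driven only by their own energy arrivals and access coins. A rate-balance argument ($\delta_i$ in, $q_i\Pr(B_i\neq 0)$ out) then gives $\Pr(B_i\neq 0)=\min(\delta_i/q_i,1)$, so node $i$ contends in a slot with the \emph{constant} probability $q_i\Pr(B_i\neq 0)=\min(\delta_i,q_i)$, independently of the queues. Substituting into \eqref{eqn:service_rate_S} with $\mathcal{A}_R=\{B_R\neq 0\}$ (the relay's packet queue being always nonempty) yields a constant service rate for the source's real-packet queue,
\begin{equation*}
\mu_S^{\mathrm{dom}}=\min(\delta_S,q_S)\,[1-\min(\delta_R,q_R)]\,[p_{SD}+(1-p_{SD})p_{SR}],
\end{equation*}
and Loynes' theorem \cite{b:Loynes} reproduces the first inequality in \eqref{eqn:R_inner} as the stability condition for $Q_S$.

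For the relay queue I would first pin down its total input rate. When $Q_S$ is stable its throughput equals $\lambda_S$, and every real packet leaving $Q_S$ departs either because $D$ decoded it (conditional probability $p_{SD}/[p_{SD}+(1-p_{SD})p_{SR}]$) or because only $R$ decoded it and must relay it (conditional probability $(1-p_{SD})p_{SR}/[p_{SD}+(1-p_{SD})p_{SR}]$); hence the relayed stream enters $Q_R$ at rate $\lambda_S(1-p_{SD})p_{SR}/[p_{SD}+(1-p_{SD})p_{SR}]$, and the aggregate arrival rate at the relay is $\lambda_R+\lambda_S(1-p_{SD})p_{SR}/[p_{SD}+(1-p_{SD})p_{SR}]$. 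Applying the same decoupling to \eqref{eqn:service_rate_R} gives the constant relay service rate $\mu_R^{\mathrm{dom}}=\min(\delta_R,q_R)[1-\min(\delta_S,q_S)]p_{RD}$, and Loynes' theorem then delivers the second inequality in \eqref{eqn:R_inner}. With both real-packet queues of the dominant system stable, the dominance from the first step makes the original network stable, which proves the proposition.

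The principal obstacle is the dominance argument itself: unlike the classical interacting-queues setting, saturating a node couples back through its \emph{battery}, so the induction must track all four queues $(Q_S,B_S,Q_R,B_R)$ simultaneously and confirm that the two competing effects (more interference versus faster battery depletion) never reverse the desired orderings. A secondary technical point is the use of Loynes' theorem for $Q_R$, whose input superposes the exogenous Bernoulli stream with the relayed stream, the latter being a functional of $Q_S$'s departures and the channel outcomes and hence correlated with the relay's own service; I would resolve this by working in the stationary regime of the dominant system, where the joint arrival–service process is stationary and ergodic so that Loynes' hypotheses are satisfied.
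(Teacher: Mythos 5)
Your calculations agree step for step with the paper's proof: the decoupled battery analysis giving $\mathrm{Pr}(B_i \neq 0)=\min(\delta_i/q_i,1)$ and hence a constant attempt probability $\min(\delta_i,q_i)$, the saturated service rates $\mu_S^{s}$ and $\mu_R^{s}$, the aggregate relay input rate $\lambda_R+\lambda_S(1-p_{SD})p_{SR}/[p_{SD}+(1-p_{SD})p_{SR}]$, and the final appeal to Loynes' theorem. The gap is exactly the step you flag as the ``principal obstacle'' and then assert can be verified: the sample-path orderings $Q_i^{\mathrm{dom}}(t)\ge Q_i(t)$ and $B_i^{\mathrm{dom}}(t)\le B_i(t)$ do \emph{not} hold in this model, so your induction cannot be completed. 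The reason is the very interaction you identify: dummy transmissions consume energy, so the dominant system's battery can be empty in a slot where the original system's battery is not, and in such a slot the dominant node interferes \emph{less} than the original one. Concretely, start both systems with $Q_S=1$, $Q_R=0$, $B_S=B_R=1$; in slot 1 let $R$'s coin say ``transmit'' and $S$'s say ``idle,'' so the dominant $R$ burns its energy unit on a dummy while the original $R$ (empty packet queue) keeps it; in slot 2 let a packet arrive at $R$ and both coins say ``transmit.'' In the original system $S$ and $R$ collide and both queues are unchanged, while in the dominant system $R$ has no energy, $S$ transmits interference-free and its packet can depart, giving $Q_S^{\mathrm{dom}}<Q_S$. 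Your claim that ``both effects retard real-packet departures'' is therefore wrong in one direction: depleting the \emph{other} node's battery accelerates departures in the dominant system. The paper itself makes precisely this observation (in its discussion of the necessary conditions) as the reason dummy-packet arguments misbehave when batteries are present: a node in the hypothetical system may be unable to transmit for lack of energy while the original node can, which improves the other node's chance of success.

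Because per-slot coupling fails, the paper does not attempt it. Its proof computes the same saturated throughputs and then invokes the known fact that the saturated (always-backlogged) throughput region is an inner bound of the stability region, citing \cite{b:EphremidesHajekUnion, b:VerduAnantharam, b:Gallager76}; the paper reserves the dominant-system construction for the \emph{outer} bound, where only the battery ordering matters in the way needed there. The correct intuition behind the inner bound is an average rather than pathwise statement: by energy conservation the original system's attempt rate is $q_i\,\mathrm{Pr}(B_i\neq 0, Q_i\neq 0)\le\min(\delta_i,q_i)$, i.e., saturation maximizes interference only in the long-run mean, and turning that into a stability proof needs an ergodic-type argument (or the cited results), not the slot-by-slot queue ordering your proposal rests on. As written, your proof has a genuine hole at its load-bearing step.
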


\begin{proof}
The proof is given in Section~\ref{sec:ECOOP_analysis_suff}.
\end{proof}

The following proposition describes an outer bound of the stability region by obtaining necessary conditions for stability.

\begin{proposition} \label{prop:necessary_conditions}
If the network in Fig.~\ref{fig:ECOOP_model} is stable then $(\lambda_S, \lambda_R) \in \mathcal{R}$, where $ \mathcal{R} = \mathcal{R}_1 \bigcup \mathcal{R}_2$, with 

\begin{align} \label{eqn:R1}
\mathcal{R}_1 =  \left\{ (\lambda_{S},\lambda_{R}) : \left[1+ \frac{\min(\delta_S,q_S) (1-p_{SD}) p_{SR} }{\left[1-\min(\delta_S,q_S)\right] p_{RD}} \right] \lambda_S + \right. \notag \\
\left.  +\frac{\min(\delta_S,q_S) \left[ p_{SD} + (1-p_{SD})p_{SR} \right]}{\left[1-\min(\delta_S,q_S)\right]p_{RD}} \lambda_R < \min(\delta_S,q_S) \left[ p_{SD} + (1-p_{SD})p_{SR} \right] , \right. \notag \\
\left. \lambda_{R}+ \frac{(1-p_{SD})p_{SR}} {p_{SD}+(1-p_{SD})p_{SR}}\lambda_{S} < \min(\delta_R,q_R) \left[1- \min(\delta_S,q_S)\right] p_{RD} \right\}.
\end{align}
\begin{align} \label{eqn:R2}
\mathcal{R}_2 =  \left\{ (\lambda_{S},\lambda_{R}) : \lambda_R + \frac{\left[1-\min(\delta_R,q_R)\right](1-p_{SD})p_{SR}+\min(\delta_R,q_R) p_{RD}}{\left[1-\min(\delta_R,q_R)\right] \left[p_{SD}+(1-p_{SD})p_{SR} \right]} \lambda_S < \min(\delta_R,q_R) p_{RD} , \right. \notag \\
\left. \lambda_S < \min(\delta_S,q_S) \left[1-\min(\delta_R,q_R)\right] \left[p_{SD}+(1-p_{SD})p_{SR} \right] \right\}
\end{align}
\end{proposition}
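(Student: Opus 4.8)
The plan is to characterize the outer bound through the \emph{stochastic dominance} technique combined with the \emph{indistinguishability argument}, which is the standard route for interacting-queue systems. I would introduce two \emph{dominant systems}: in the first, the source $S$ transmits dummy (fake) packets whenever its queue empties, so that $S$ is never idle; in the second, the relay $R$ does the same. Injecting dummy packets only increases interference and never removes a real packet, so under a common coupling of the arrival, energy, channel, and access-coin processes, every queue in each dominant system is pathwise at least as long as the corresponding queue in the original system. Consequently, stability of a dominant system implies stability of the original, and---read in the other direction---instability of the original will be inferred from instability of the dominant systems.

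The crucial gain from saturating one node is that its transmission process becomes stationary and independent of the other node's backlog, so Loynes' theorem applies to the free node. First I would analyze the battery recursion of the saturated node in isolation: with Bernoulli charging at rate $\delta_i$ and a unit drawn with probability $q_i$ in every slot the queue is busy, the battery behaves as a discrete-time queue whose nonempty probability is $\min(\delta_i/q_i,1)$, so the \emph{effective} transmission probability of a saturated node $i$ collapses to $q_i\,\mathrm{Pr}(B_i\neq 0)=\min(\delta_i,q_i)$. This is the origin of every $\min(\delta_i,q_i)$ factor. Next I would account for the relayed traffic: when $S$'s queue is stable its departure rate is $\lambda_S$, and a fraction $\tfrac{(1-p_{SD})p_{SR}}{p_{SD}+(1-p_{SD})p_{SR}}$ of those departures is handed to $R$, so the total arrival rate into $R$ is $\lambda_R+\tfrac{(1-p_{SD})p_{SR}}{p_{SD}+(1-p_{SD})p_{SR}}\lambda_S$, which explains the left-hand sides of the second inequalities. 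In dominant system~1 ($S$ saturated) $S$ interferes at rate $\min(\delta_S,q_S)$; applying Loynes to $R$ gives $\lambda_R+\cdots\lambda_S<\min(\delta_R,q_R)[1-\min(\delta_S,q_S)]p_{RD}$, while requiring $S$'s own (real) queue to clear---its service rate being $\min(\delta_S,q_S)[p_{SD}+(1-p_{SD})p_{SR}]$ thinned by the fraction of slots $R$ is active, which I would substitute using $R$'s throughput balance---yields exactly the first inequality of $\mathcal{R}_1$. The symmetric computation in dominant system~2 ($R$ saturated) produces $\mathcal{R}_2$.

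For the necessity (outer-bound) direction I would invoke indistinguishability: as long as the node that sends dummies still holds genuine packets, the dominant and original systems evolve identically, so if the tracked queue saturates the dummy mechanism is never exercised and the dominant system coincides sample-path-wise with the original. Hence a violation of the stability condition in a dominant system drives the corresponding queue to infinity in the original as well, and a stable original system must therefore satisfy the stability conditions of \emph{at least one} dominant system---the one in which the binding (saturating) node is the one fed with dummies. Since either $S$ or $R$ may play that role depending on the operating regime, the stable rate pairs are forced into $\mathcal{R}_1\cup\mathcal{R}_2$, which is the asserted outer bound.

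The main obstacle I anticipate is twofold. First, the per-node state is two-dimensional (packet queue \emph{and} battery), so the decoupling that underlies Loynes is not automatic: I must justify rigorously that, once a node is saturated, the energy queue contributes only the multiplicative factor $\min(\delta_i,q_i)$ and is otherwise independent of the interfering node's state. Second, the relaying couples the two queues through the \emph{arrival} process of $R$---not merely through collisions---so the clean sample-path domination and the indistinguishability coincidence must be re-established in the presence of this endogenous arrival stream; verifying that dummy packets are never relayed, and hence leave $R$'s real arrival process unchanged while still inflating interference, is the delicate step that keeps the dominance monotone and makes the bound tight.
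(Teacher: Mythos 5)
Your construction and calculations coincide with the paper's: the same two hypothetical systems in which one node sends dummy packets, the battery treated as a decoupled discrete-time queue so that a saturated node transmits with effective probability $\min(\delta_i,q_i)$, rate conservation to obtain the fraction of active slots, and Loynes' criterion applied to each queue---this reproduces exactly $\mathcal{R}_1$ and $\mathcal{R}_2$. The gap is the logical bridge from the hypothetical systems back to the original one. Both pillars you lean on, sample-path dominance and the indistinguishability argument, fail in this energy-harvesting setting, and the paper states this explicitly (``the indistinguishability argument does not apply to our problem''). The reason is that dummy packets consume energy: under the common coupling you describe, the dummy-sending node's battery in the hypothetical system is pathwise \emph{smaller} than in the original system, so there are slots in which that node transmits in the original system (queue and battery both nonempty) while it cannot transmit in the hypothetical one (battery empty). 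In such slots the original system experiences \emph{more} interference than the hypothetical one; hence your claim that ``injecting dummy packets only increases interference'' is false, the other node's queue in the hypothetical system need not be at least as long as in the original, and the saturation/coincidence argument breaks as well, since once dummies have been sent the two systems carry different battery states and no longer evolve identically even while the tracked queue stays nonempty.

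Because of this, your inference ``violation of a stability condition in a dominant system drives the corresponding queue to infinity in the original as well'' is unsupported, and your concluding claim that the argument ``makes the bound tight'' contradicts the structure of the paper: precisely because dominance and indistinguishability fail here, $\mathcal{R}_1 \bigcup \mathcal{R}_2$ is claimed only as an \emph{outer} bound for fixed $(q_S,q_R)$, sufficiency is established separately for the strictly smaller saturated region $\mathcal{R}_{inner}$ of Proposition~\ref{prop:sufficient_conditions}, and the two bounds are reconciled only after taking the closure over $(q_S,q_R)$ in Theorem~\ref{thm}. Your final paragraph also misdiagnoses the delicate step: the issue is not whether dummy packets get relayed, nor the Loynes decoupling per se, but the energy coupling described above. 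A necessity argument must upper-bound the original system's long-run rates directly---e.g., via the energy balance $q_i\,\mathrm{Pr}(\mathcal{A}_i)\le\min(\delta_i,q_i)$ together with rate conservation for the activity fractions, which is the substance of the paper's hypothetical-system computation---rather than through a pathwise comparison with the dummy-packet systems, which in this model is simply not monotone.
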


\begin{proof}
The proof is given in Section~\ref{sec:ECOOP_analysis_necc}.
\end{proof}

\begin{figure}[t]

\centering
 \subfigure[$\mathcal{R}_1$]{
 \includegraphics[scale=0.6]{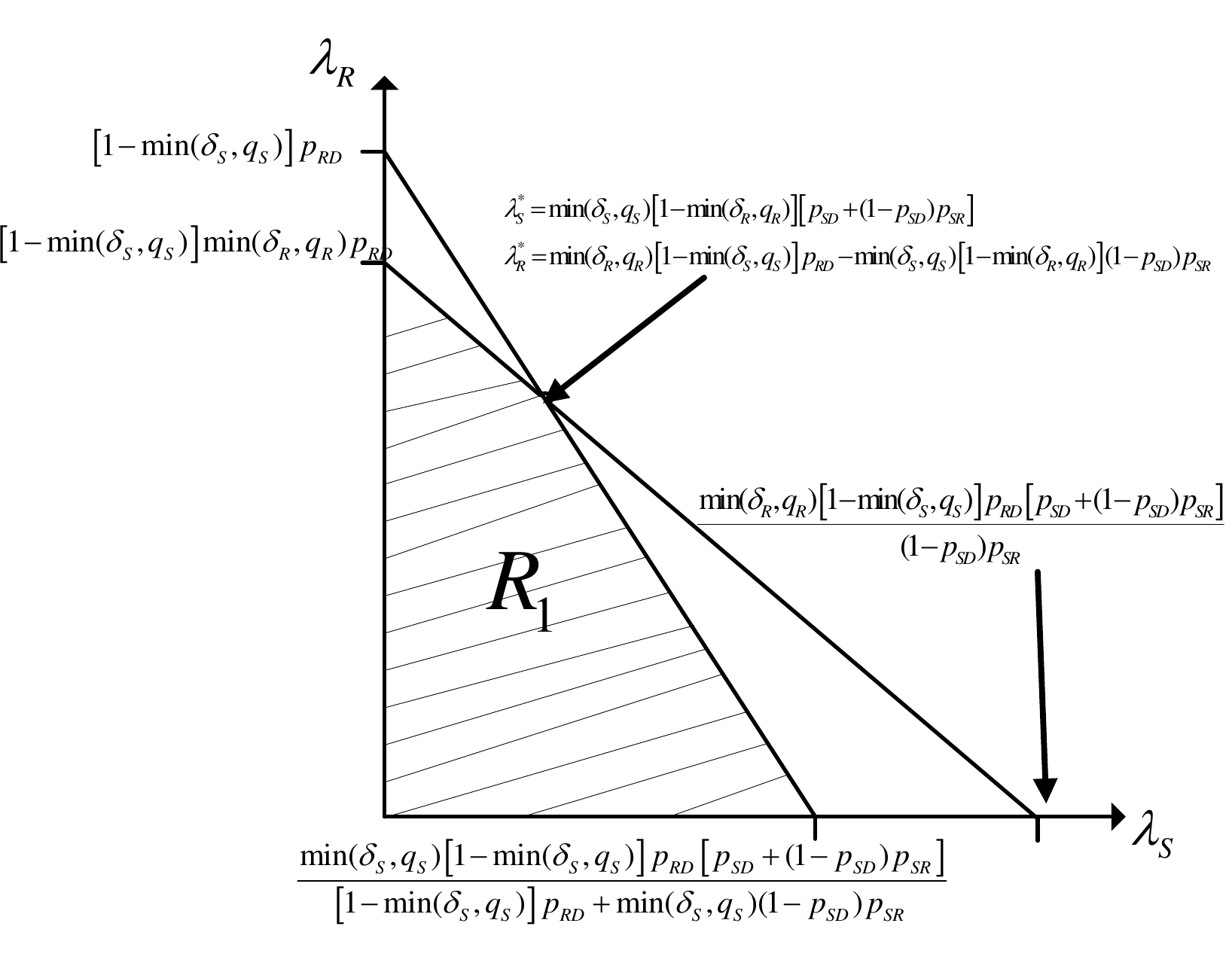}
 \label{fig:ECOOP_1stdom}
 }

  \subfigure[$\mathcal{R}_2$.]{
  \includegraphics[scale=0.6]{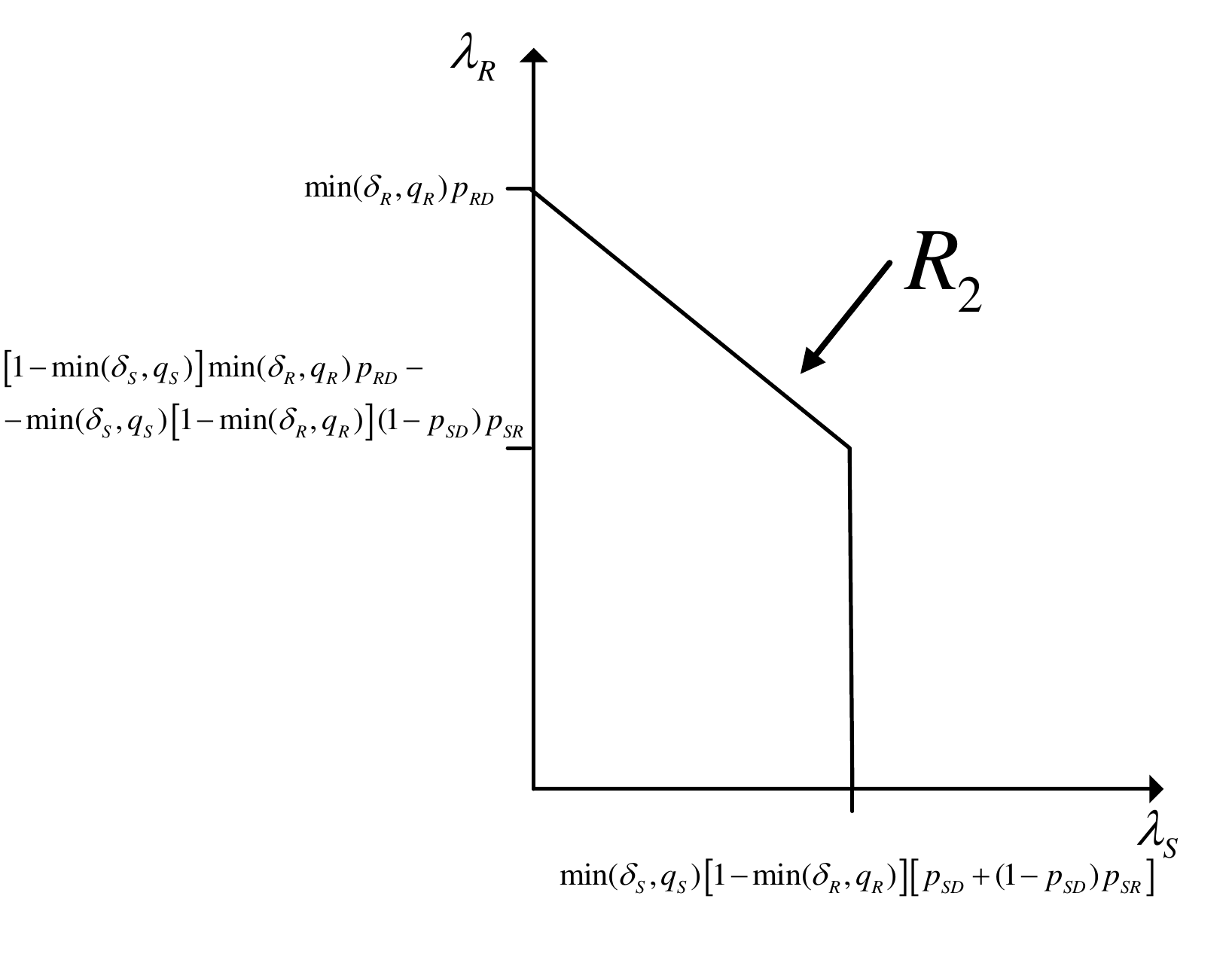}
  \label{fig:ECOOP_2nddom}
  }
  \caption{An outer bound of the stability region $ \mathcal{R} = \mathcal{R}_1 \bigcup \mathcal{R}_2$,  described in Proposition~\ref{prop:necessary_conditions}.}
\end{figure}

Fig.~\ref{fig:ECOOP_1stdom} and~\ref{fig:ECOOP_2nddom} illustrate the $\mathcal{R}_1$ and $\mathcal{R}_2$ described in Proposition~\ref{prop:necessary_conditions}.

In the previous proposition we provided the stability conditions for given transmission probabilities $q_S$ and $q_R$, this stability region is 
denoted by $\mathcal{L}(q_S, q_R, \delta_S, \delta_R)$. Note that

\begin{equation}
\mathcal{R}_{inner} \subseteq \mathcal{L}(q_S, q_R, \delta_S, \delta_R) \subseteq \mathcal{R}_1 \bigcup \mathcal{R}_2.
\end{equation}

The closure of the stability region is defined by

\begin{equation} \label{eq:closure_def}
\mathcal{L}(\delta_S, \delta_R) \triangleq \bigcup_{ (q_S, q_R) \in [0,1]^2} \mathcal{L}(q_S, q_R, \delta_S, \delta_R).
\end{equation}

The following theorem describes the closure of the stability region for the network we consider.

\begin{theorem} \label{thm}
If $\delta_S + \delta_R  \geq 1$, the closure of the stability region, $\mathcal{L}(\delta_S, \delta_R)$, is illustrated in Fig. \ref{fig:closure1} and is described by three parts. (i) The line segment $AB$, where $x_A = 0$, $y_A = \delta_R p_{RD}$ and $x_B=(1 - \delta_R)^2 \left[p_{SD}+(1-p_{SD})p_{SR} \right]$, $y_B = \delta_{R}^2 p_{RD} - (1 - \delta_R)^2 (1-p_{SD})p_{SR}$.

(ii) the curve from $B$ to $C$ which is described by
 
\begin{equation}
\sqrt{\frac{\lambda_S}{P_{SD}+ (1-P_{SD})P_{SR}}} + \sqrt{\frac{P_{SR}(1-P_{SD})\lambda_S}{P_{RD}\left[ P_{SD}+ (1-P_{SD})P_{SR} \right]}+\frac{\lambda_R}{P_{RD}}} = 1
\end{equation}
 
(iii) the line segment $CD$ where $x_C =\delta_S ^2 \left[p_{SD}+(1-p_{SD})p_{SR} \right]$, $y_C=(1-\delta_S)^2 p_{RD}- \delta_S ^2(1-p_{SD})p_{SR}$ and $x_D = \min \left\lbrace \frac{(1-\delta_S)^2 \left[p_{SD}+(1-p_{SD})p_{SR} \right]}{(1-p_{SD})p_{SR}}, \frac{\delta_S(1-\delta_S)p_{RD}\left[p_{SD}+(1-p_{SD})p_{SR} \right]}{(1-\delta_S)p_{RD}+\delta_S (1-p_{SD})p_{SR}} \right\rbrace $, $y_D=0$.

If $\delta_S + \delta_R  < 1$, the closure of the stability region, $\mathcal{L}(\delta_S, \delta_R)$, is illustrated in Fig. \ref{fig:closure2} and is described by the line segments $EF$ and $FG$, where $x_E=0$, $y_E=\delta_R p_{RD}$, $x_F=\delta_S (1-\delta_R) \left[p_{SD}+(1-p_{SD})p_{SR} \right]$, $y_F= \delta_R (1-\delta_S) p_{RD} - \delta_S (1-\delta_R) (1-p_{SD}) p_{SR}$, 

$x_G=\min \left\lbrace \frac{(1-\delta_S) \delta_R p_{RD} \left[p_{SD}+(1-p_{SD})p_{SR} \right]}{(1-p_{SD})p_{SR}}, \frac{\delta_S(1-\delta_S)p_{RD}\left[p_{SD}+(1-p_{SD})p_{SR} \right]}{(1-\delta_S)p_{RD}+\delta_S (1-p_{SD})p_{SR}} \right\rbrace$ and $y_G=0$.
 \end{theorem}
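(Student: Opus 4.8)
The plan is to exploit the fact that both bounds in Proposition~\ref{prop:sufficient_conditions} and Proposition~\ref{prop:necessary_conditions} depend on the access probabilities only through $s \triangleq \min(\delta_S,q_S)$ and $r \triangleq \min(\delta_R,q_R)$, and that as $(q_S,q_R)$ ranges over $[0,1]^2$ the pair $(s,r)$ sweeps the full rectangle $[0,\delta_S]\times[0,\delta_R]$. Hence the closure in \eqref{eq:closure_def} is a union over this rectangle, and I would compute the closure of the inner bound $\mathcal{R}_{inner}$ and of the outer bound $\mathcal{R}_1\cup\mathcal{R}_2$ separately and show they coincide; by the sandwich $\mathcal{R}_{inner}\subseteq\mathcal{L}(q_S,q_R,\delta_S,\delta_R)\subseteq\mathcal{R}_1\cup\mathcal{R}_2$ this pins down $\mathcal{L}(\delta_S,\delta_R)$. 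Throughout write $P = p_{SD}+(1-p_{SD})p_{SR}$ and $\alpha=(1-p_{SD})p_{SR}$.

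For the inner bound, a point $(\lambda_S,\lambda_R)$ lies in the closure iff there is $(s,r)$ in the rectangle with
\begin{equation*}
s(1-r) > \frac{\lambda_S}{P}, \qquad r(1-s) > \frac{\lambda_R}{p_{RD}}+\frac{\alpha\,\lambda_S}{P\,p_{RD}}.
\end{equation*}
The heart of the argument is to trace the Pareto boundary of this achievable set: for fixed $\lambda_S$ I would maximize the largest admissible $\lambda_R$, i.e. maximize $r(1-s)$ subject to $s(1-r)\ge \lambda_S/P$ over the rectangle. Ignoring the box caps, a Lagrange-multiplier (or direct substitution) computation shows the optimizer satisfies $s+r=1$, whence $s=\sqrt{\lambda_S/P}$ and $\max r(1-s)=(1-\sqrt{\lambda_S/P})^2$. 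Substituting into the second inequality at equality and taking square roots yields precisely the arc $BC$ of part (ii),
\begin{equation*}
\sqrt{\frac{\lambda_S}{P}}+\sqrt{\frac{\alpha\,\lambda_S}{p_{RD}P}+\frac{\lambda_R}{p_{RD}}}=1.
\end{equation*}

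The case split is then forced by the box. The unconstrained optimizer $s=\sqrt{\lambda_S/P}$, $r=1-\sqrt{\lambda_S/P}$ is feasible exactly when $1-\delta_R\le\sqrt{\lambda_S/P}\le\delta_S$, an interval that is nonempty iff $\delta_S+\delta_R\ge1$. For $\delta_S+\delta_R\ge1$ the optimizer therefore passes through three regimes: for small $\lambda_S$ the cap $r=\delta_R$ is active (line $AB$), for $\lambda_S\in[(1-\delta_R)^2P,\ \delta_S^2P]$ the interior arc $BC$ is active, and for large $\lambda_S$ the cap $s=\delta_S$ is active (line $CD$); substituting $r=\delta_R$, the relation $s+r=1$, and $s=\delta_S$ respectively reproduces $A,B,C,D$. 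When $\delta_S+\delta_R<1$ the interior optimizer is never feasible, so the maximizer always sits on the boundary of the rectangle and switches from the edge $r=\delta_R$ to the edge $s=\delta_S$ exactly at the corner $(s,r)=(\delta_S,\delta_R)$, producing the segments $EF$, $FG$ with the corner giving $F$. The $\lambda_S$-axis intercepts $x_D,x_G$ need a separate max--min: at $\lambda_R=0$ the largest feasible $\lambda_S$ is $\sup_{(s,r)}\min\{\,s(1-r)P,\ r(1-s)p_{RD}P/\alpha\,\}$, and the two entries of the $\min$ in the statement record whether the source inequality or the relay inequality binds, i.e. whether the balancing value of $r$ lands inside $[0,\delta_R]$ or is clipped.

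Finally I would run the identical program on the outer bound, where $\mathcal{R}_1$ supplies the constraints that dominate in the relay-limited portion and $\mathcal{R}_2$ those dominating in the source-limited portion; taking the union over the rectangle and simplifying should collapse both families onto the same arc and line segments. The main obstacle I anticipate is not the curve $BC$, which falls out cleanly from $s+r=1$, but the boundary bookkeeping: carrying out the constrained optimization as a rigorous KKT case analysis with the two box caps, and then proving that the outer-bound closure nowhere exceeds the inner-bound closure (the tightness asserted by the sandwich), including verifying the vertex algebra and the clipping that produces the $\min$ in $x_D$ and $x_G$.
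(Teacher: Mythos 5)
Your proposal is correct and follows essentially the same route as the paper: restricting to the effective rectangle $[0,\delta_S]\times[0,\delta_R]$, an optimization whose interior critical point satisfies $s+r=1$ (yielding the square-root curve), a box-constraint case analysis that splits on $\delta_S+\delta_R\gtrless 1$ (yielding the line segments and the $\min$ in the axis intercepts), and finally the coincidence of the inner and outer closures. The only difference is organizational --- you compute the inner-bound closure first and then match the outer bound, whereas the paper solves the outer-bound problems $[P_1]$, $[P_2]$ first and then shows the saturated-throughput (inner) boundary can be steered to any point of that closure --- but after substituting the binding constraint your objective is literally the paper's $[P_2]$ objective, so the computations are the same.
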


\begin{proof}
The proof is given in Section~\ref{sec:ECOOP_closure}.
\end{proof}

\begin{figure}[t]
\centering
\includegraphics[scale=0.5]{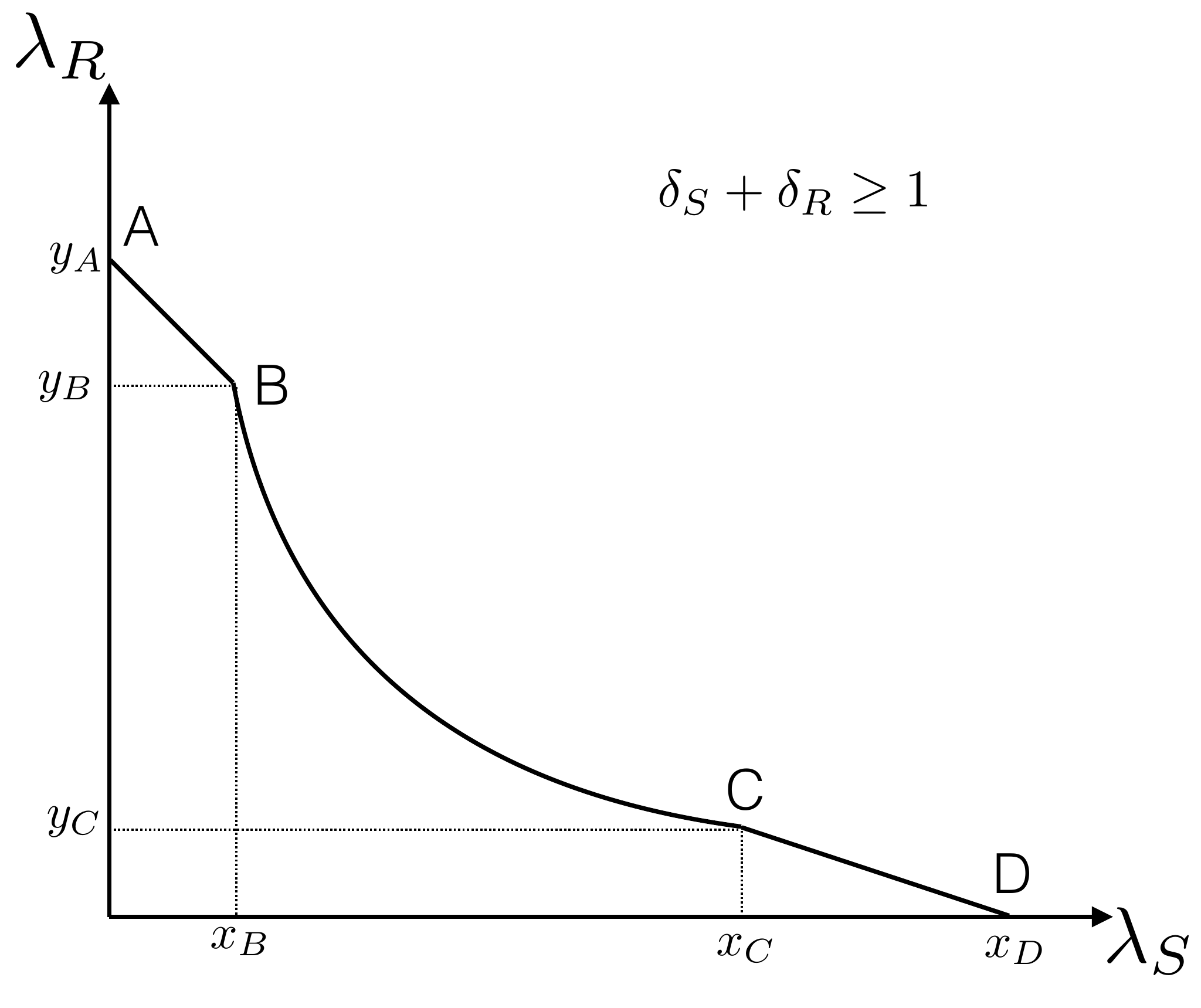}
\caption{The closure of the stability region for $\delta_S+\delta_R \geq 1$.}
\label{fig:closure1}
\end{figure}

\begin{figure}[t]
\centering
\includegraphics[scale=0.5]{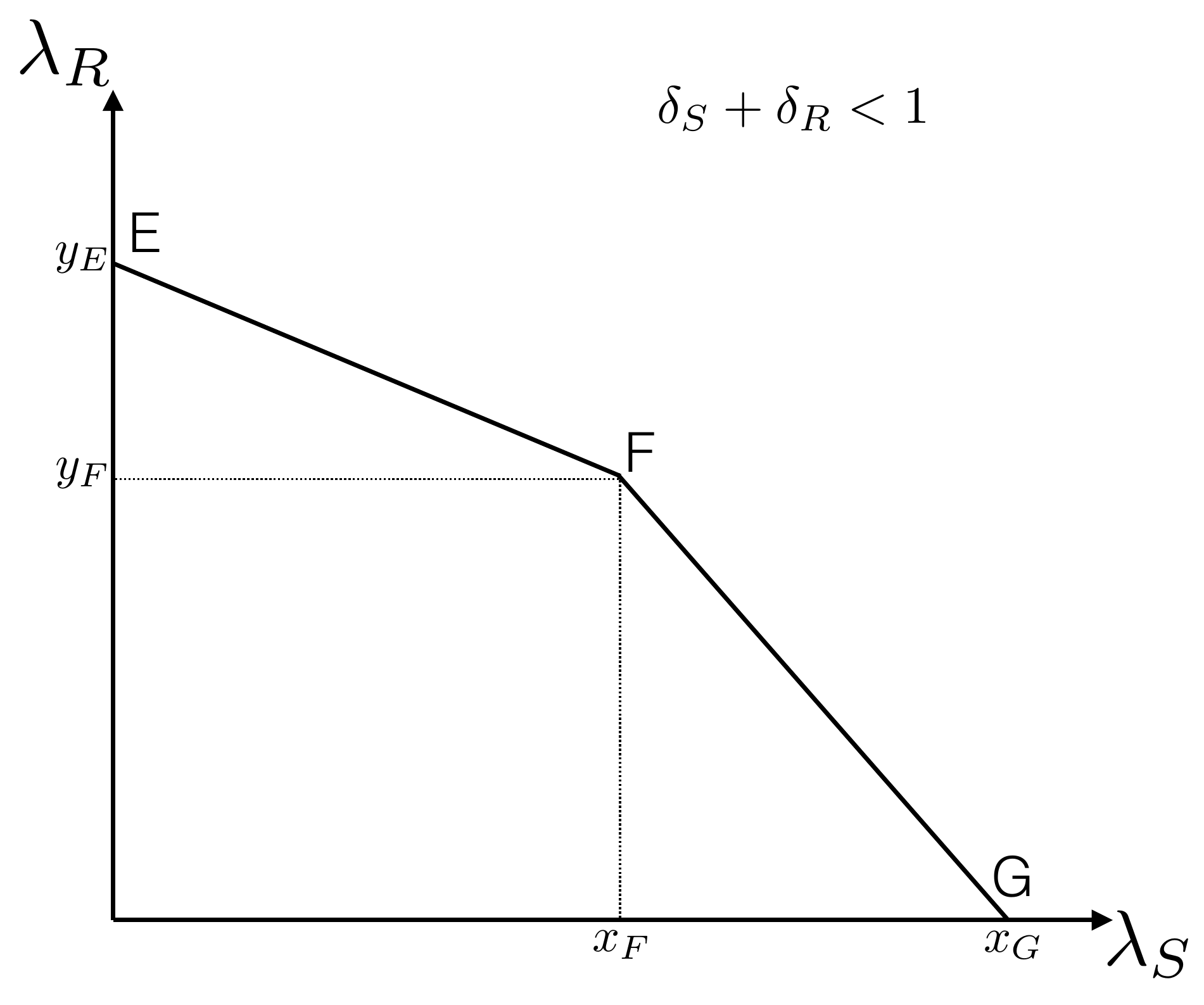}
\caption{The closure of the stability region for $\delta_S+\delta_R < 1$.}
\label{fig:closure2}
\end{figure}

\emph{Remark 1:} Regarding the stability region $\mathcal{L}(q_S, q_R, \delta_S, \delta_R)$, we only have inner and outer bounds and not the exact expression, however
for the closure $\mathcal{L}(\delta_S, \delta_R)$, we do have the exact characterization.

\emph{Remark 2:} If $\delta_S + \delta_R  < 1$ as depicted in Fig. \ref{fig:closure2}, the closure of the stability region has a linear behavior, which is an indication that the performance of the system is affected by the low energy harvesting rates. Furthermore when $\delta_S + \delta_R > 1$, when the arrival
rate at the source or the relay is at high arrival rate regime then the performance is affected by the energy harvesting rate and is depicted by the linear segments in Fig. \ref{fig:closure1}. The interesting case is when both the arrival rates $\lambda_S$ and $\lambda_R$ lie in the intermediate arrival rate regime, then the performance is identical to the relay network without energy limitations and the closure of the stability region has a non-linear behavior.

\section{Analysis} \label{sec:ECOOP_analysis}

To derive the stability condition for the queue in the relay node, we need to calculate the total arrival rate. There are two independent arrival processes at the relay: the exogenous traffic with arrival rate $\lambda_R$ and the endogenous traffic from $S$. Denote by $S_A$ the event that $S$ transmits a packet and the packet leaves the queue, then
\begin{equation}
\mathrm{Pr}(S_A)=\left[1-q_{R}\mathrm{Pr}(\mathcal{A}_R)  \right] \left[p_{SD}+(1-p_{SD})p_{SR} \right].
\end{equation}

Among the packets that depart from the queue of $S$, some will exit the network because they are decoded by the destination directly, and some will be relayed by $R$. Denote by $S_B$ the event that the transmitted packet from $S$ will be relayed from $R$, then
\begin{equation}
\mathrm{Pr}(S_B)=\left[1-q_{R}\mathrm{Pr}(\mathcal{A}_R) \right] (1-p_{SD})p_{SR}.
\end{equation}
The conditional probability that a transmitted packet from $S$ is relayed by $R$ given that the transmitted packet exits node $S$'s queue is given by
\begin{equation}
\mathrm{Pr}(S_B | S_A)=\frac{(1-p_{SD})p_{SR}} {p_{SD}+(1-p_{SD})p_{SR}}.
\end{equation}
The arrival rate from the source to the relay is
\begin{equation}
\lambda_{S \rightarrow R} = \mathrm{Pr}(S_B | S_A) \lambda_S.
\end{equation}

Note that the average arrival rate from the source to the relay, $\lambda_{S \rightarrow R}$, does not depend on the transmission probabilities or the energy harvesting rates, it depends only from the success probabilities $p_{SD}$ and $p_{SR}$.

The total arrival rate at the relay node is given by
\begin{equation}
\lambda_{R,total}=\lambda_{R}+ \frac{(1-p_{SD})p_{SR}} {p_{SD}+(1-p_{SD})p_{SR}}\lambda_{S}.
\end{equation}

\subsection{Sufficient Conditions} \label{sec:ECOOP_analysis_suff}

A queue is considered saturated if in each time slot there is always a packet to transmit, i.e. the queue is never empty.
Assuming saturated queues for the source and the relay node, the saturated throughput for the source node is given by
\begin{equation}
\begin{aligned}
\mu_{S}^{s} = \left\{ q_S (1-q_R) \mathrm{Pr}\left(B_S\neq 0, B_R \neq 0 \right) + q_S  \mathrm{Pr} \left(B_S \neq 0, B_R = 0 \right) \right\}  \left[p_{SD}+(1-p_{SD})p_{SR} \right],
\end{aligned}
\end{equation}

and for the relay is given by
\begin{align}
\mu_{R}^{s} = \left\{q_R (1-q_S) \mathrm{Pr}\left(B_S \neq 0, B_R \neq 0 \right)+  q_R \mathrm{Pr}(B_S = 0 ,B_R \neq 0) \right\} p_{RD}.
\end{align}

Each node transmits with probability $q_i$, $i=S,R$, whenever its battery is not empty and each transmission demands one energy packet. Each energy queue $i$ is then decoupled and forms a discrete-time $M/M/1$ queue with input rate $\delta_i$
and service rate $q_i$, thus the probability the energy queue to be empty is given by
\begin{equation} \label{eqn:prob_nonempty_battery}
\mathrm{Pr}\left(B_i \neq 0 \right) = \min \left( \frac{\delta_i}{q_i} , 1 \right).
\end{equation}

Note that the battery queues are decoupled since we assume that each node, either the source or the relay, attempts to transmit independently in a random access manner.

Then, after some calculations, we obtain that the saturated throughput for the source is
\begin{equation}
\mu_{S}^{s} = \min \left(\delta_S ,q_S \right) \left[1 - \min \left(\delta_R ,q_R \right) \right]  \left[p_{SD}+(1-p_{SD})p_{SR} \right],
\end{equation}
and for the relay is
\begin{equation} \label{eqn:saturated_thr_R}
\mu_{R}^{s} = \min \left(\delta_R ,q_R \right) \left[1 - \min \left(\delta_S ,q_S \right) \right] p_{RD}.
\end{equation}

The sufficient conditions ($\mathcal{R}_{inner}$) for the stability are obtained by $\lambda_S < \mu_{S}^{s}$ and $\lambda_{R,total} <\mu_{R}^{s}$ and are given by (\ref{eqn:R_inner}), in Proposition~\ref{prop:sufficient_conditions}.

The saturated throughput that we obtained in this subsection is an inner bound of the stability region, since the assumption that the source and the relay have always packets to transmit leads to lower achievable rates in terms of packet per slot \cite{b:EphremidesHajekUnion, b:VerduAnantharam, b:Gallager76}.

\subsection{Necessary Conditions} \label{sec:ECOOP_analysis_necc}

The average service rates for the source and the relay are given by (\ref{eqn:service_rate_S}) and (\ref{eqn:service_rate_R}), respectively. The average service rate of each queue depends on the status of its own energy and also the
queue size and the energy statues of the other queues. This coupling between the queues (both packet and energy) results in a four dimensional Markov chain which makes the analysis cumbersome. Therefore, the stochastic dominant technique~\cite{rao:stability} is essential in order to decouple the interaction between the queues, and thus to characterize the stability region. Thus, we first construct parallel dominant systems in which one of the nodes transmits dummy packets when its packet queue is empty. Note that even in the dominant system a node cannot transmit if the energy source is empty (because even the dummy packet consumes one energy unit).

We consider the first hypothetical system in which the source node transmits dummy packets when its queue is empty and all the other assumptions remain intact.
The average service rate for the relay given by (\ref{eqn:service_rate_R}) becomes
\begin{equation}
\mu_R= \left\{q_R (1-q_S) \mathrm{Pr}\left(B_S \neq 0, B_R \neq 0 \right)+ q_R \mathrm{Pr}(B_S = 0 ,B_R \neq 0) \right\} p_{RD}.
\end{equation}

The average service rate of the relay, $\mu_{R}$, in the first hypothetical system is the same with the saturated throughput of the relay obtained in (\ref{eqn:saturated_thr_R}). From Loyne's criterion, the relay is stable if $\lambda_{R,total} < \mu_R$, thus
\begin{equation}
\lambda_{R}+ \frac{(1-p_{SD})p_{SR}} {p_{SD}+(1-p_{SD})p_{SR}}\lambda_{S} < \min \left(\delta_R ,q_R \right) \left[1 - \min \left(\delta_S ,q_S \right) \right] p_{RD}.
\end{equation}

The average number of packets per active slot for $R$ is $\left[1 - \min \left(\delta_S ,q_S \right) \right] q_R p_{RD}$, thus the fraction of active slots is given by
\begin{equation} \label{eqn:prob_active_R_1st_hyp_system}
 \mathrm{Pr}\left(B_R \neq 0, Q_R\neq 0 \right) = \frac{\lambda_{R}+ \frac{(1-p_{SD})p_{SR}} {p_{SD}+(1-p_{SD})p_{SR}}\lambda_{S}}{\left[1-\min \left(\delta_S ,q_S \right) \right] q_R p_{RD}}.
\end{equation}

After changing (\ref{eqn:prob_nonempty_battery}) and (\ref{eqn:prob_active_R_1st_hyp_system}) into (\ref{eqn:service_rate_S}), the service rate for the source becomes
\begin{equation}
\begin{aligned}
\mu_S= \min \left(\delta_S ,q_S \right) \left[1 - \frac{\lambda_{R}+ \frac{(1-p_{SD})p_{SR}} {p_{SD}+(1-p_{SD})p_{SR}}\lambda_{S}}{\left[1-\min \left(\delta_S ,q_S \right) \right] p_{RD}} \right] \left[p_{SD}+(1-p_{SD})p_{SR} \right].
\end{aligned}
\end{equation}

The queue in $S$ is stable if $\lambda_S < \mu_S$ and after some manipulations we obtain
\begin{equation}
\begin{aligned}
\left[1+ \frac{\min \left(\delta_S ,q_S \right) (1-p_{SD}) p_{SR} }{\left[1-\min \left(\delta_S ,q_S \right) \right]p_{RD}} \right] \lambda_S + \frac{\min \left(\delta_S ,q_S \right) \left[ p_{SD} + (1-p_{SD})p_{SR} \right]}{
\left[1-\min \left(\delta_S ,q_S \right) \right]p_{RD}} \lambda_R \\
 < \min \left(\delta_S ,q_S \right) \left[ p_{SD} + (1-p_{SD})p_{SR} \right].
\end{aligned}
\end{equation}

The derived stability conditions from the first hypothetical system are summarized in (\ref{eqn:R1}).

In the second hypothetical system, the relay node transmits dummy packets and all the other assumptions remain intact. Thus, the average service rate for the source given by (\ref{eqn:service_rate_S}) becomes
\begin{equation}
\begin{aligned}
\mu_S= \left\{ q_S (1-q_R) \mathrm{Pr}\left(B_S \neq 0, B_R \neq 0 \right) +q_S \mathrm{Pr}(B_S \neq 0, B_R = 0) \right\}  \left[p_{SD}+(1-p_{SD})p_{SR} \right],
\end{aligned}
\end{equation}

which is equal to saturated throughput of the source and is given by
\begin{equation}
\begin{aligned}
\mu_S=\min \left(\delta_S ,q_S \right) \left[1 - \min \left(\delta_R ,q_R \right) \right]  \left[p_{SD}+(1-p_{SD})p_{SR} \right].
\end{aligned}
\end{equation}

From Loyne's theorem, the queue in source is stable if $\lambda_S < \mu_S$, thus
\begin{equation}
\begin{aligned}
\lambda_S < \min \left(\delta_S ,q_S \right) \left[1 - \min \left(\delta_R ,q_R \right) \right] \left[p_{SD}+(1-p_{SD})p_{SR} \right].
\end{aligned}
\end{equation}

The average number of packets per active slot for $S$ is $ q_S \left[1 - \min \left(\delta_R ,q_R \right) \right] \left[p_{SD}+(1-p_{SD})p_{SR} \right] $.
The fraction of active slots for the source $S$ is
\begin{equation} \label{eqn:prob_active_S_2nd_hyp_system}
 \mathrm{Pr}\left(B_S \neq 0, Q_S\neq 0 \right) = \frac{\lambda_{S}}{q_S \left[1 - \min \left(\delta_R ,q_R \right) \right] \left[p_{SD}+(1-p_{SD})p_{SR} \right]}.
\end{equation}

After replacing from (\ref{eqn:prob_nonempty_battery}) and (\ref{eqn:prob_active_S_2nd_hyp_system}) into (\ref{eqn:service_rate_R}), the service rate for the relay is
\begin{equation}
\mu_R=\min \left(\delta_R ,q_R \right) \left[1- \frac{\lambda_{S}}{\left[1 - \min \left(\delta_R ,q_R \right) \right] \left[p_{SD}+(1-p_{SD})p_{SR} \right]}\right] p_{RD}.
\end{equation}

The queue in the relay node $R$ is stable if $\lambda_{R,total} < \mu_R$ and after some manipulations we obtain
\begin{equation}
\begin{aligned}
\lambda_R + \frac{\left[1 - \min \left(\delta_R ,q_R \right) \right](1-p_{SD})p_{SR}+\min \left(\delta_R ,q_R \right) p_{RD}}{\left[1 - \min \left(\delta_R ,q_R \right) \right] \left[p_{SD}+(1-p_{SD})p_{SR} \right]} \lambda_S 
< \min \left(\delta_R ,q_R \right) p_{RD}.
\end{aligned}
\end{equation}

The derived stability conditions from the second hypothetical system are given by (\ref{eqn:R2}).

An important observation made in \cite{rao:stability} is that the stability conditions obtained by using the stochastic dominance technique are not merely sufficient conditions for the stability of the original system but are sufficient and necessary conditions. However, the \emph{indistinguishability} argument does not apply to our problem. In a system with batteries, the dummy packet transmissions affect the dynamics of the batteries. For example, there are instants when a node is no more able to transmit in the hypothetical system
because of the lack of energy, while it is able to transmit in the original system, thus it may result to a better chance of success for the other node.

The obtained stability conditions are necessary conditions of the original system and are summarized in Proposition~\ref{prop:necessary_conditions}.

\section{Proof of Theorem \ref{thm}} \label{sec:ECOOP_closure}
In this section we will derive the closure of the outer of the stability region defined in Proposition \ref{prop:necessary_conditions}.
The closure will be obtained over all the feasible transmission probability vectors $(q_S,q_R) \in [0,1]^2$ and is defined by (\ref{eq:closure_def}). After that we will show that the closure of the outer bound can be achieved. 

An interesting observation is that the outer bound in Proposition \ref{prop:necessary_conditions} does not depend on $\delta_i$, $i=S,R$ for $q_S \leq \delta_S$ and $q_R \leq \delta_R$. Furthermore, if $q_i$ is increased over $\delta_i$ for $i=S,R$ has no effect because the value of $\min (\delta_i, q_i)$ is bounded by $\delta_i$.

In order to obtain the closure we have to solve two optimization problems. By replacing $\lambda_S$ by $x$ and $\lambda_R$ by $y$, the optimization problems are
\begin{equation}\label{eq:P1}
\text{[$P_1$]       }   \max_{q_S} \text{   }x=\frac{q_S (1-q_S) \left[P_{SD}+ (1-P_{SD})P_{SR}\right] P_{RD}}{(1-q_S)P_{RD}+q_S (1-P_{SD})P_{SR}} - \frac{q_S \left[P_{SD}+ (1-P_{SD})P_{SR}\right]}{(1-q_S)P_{RD}+q_S (1-P_{SD})P_{SR}}y
\end{equation}
\begin{eqnarray}
\text{subject to    } & y+\frac{(1-P_{SD})P_{SR}}{\left[P_{SD}+ (1-P_{SD})P_{SR}\right]}x<q_R (1-q_S) P_{RD} \\
& (q_S, q_R) \in [0,\delta_S] \times [0,\delta_R] \\
& (x, y) \in [0,1]^2
\end{eqnarray}
and
\begin{equation} \label{eq:P2}
\text{[$P_2$]       }   \max_{q_R} \text{   }y=q_R P_{RD} -\frac{(1-P_{SD})P_{SR}}{\left[P_{SD}+ (1-P_{SD})P_{SR}\right]}x - \frac{q_R P_{RD}}{(1-q_R) \left[P_{SD}+ (1-P_{SD})P_{SR}\right]} x
\end{equation}
\begin{eqnarray}
\text{subject to    } & x < q_S (1-q_R) \left[P_{SD}+ (1-P_{SD})P_{SR}\right] \\
&(q_S, q_R) \in [0,\delta_S] \times [0,\delta_R] \\
&(x, y) \in [0,1]^2
\end{eqnarray}

In order to solve $[P_2]$, the differentiation of $y$ with respect to $q_R$ gives
\begin{equation}
\frac{dy}{dq_R}=P_{RD} \left[ 1- \frac{x}{(1-q_R)^2  \left[P_{SD}+ (1-P_{SD})P_{SR}\right] } \right].
\end{equation}

The second derivative is negative since
\begin{equation}
\frac{d^2y}{dq^2_R}=- \frac{2 P_{RD} x}{(1-q_R)^3 \left[P_{SD}+ (1-P_{SD})P_{SR}\right]} < 0,
\end{equation}

thus, the objective function is concave with respect to $y$. Solving the equation $\frac{dy}{dq_R}=0$, we obtain the maximizing $q_{R}^{*}$ where
\begin{equation}
q_{R}^{*}=1-\sqrt{\frac{x}{P_{SD}+ (1-P_{SD})P_{SR}}},
\end{equation}
the corresponding maximum value of the objective function is
\begin{equation}
y^* = P_{RD} - 2 P_{RD} \sqrt{\frac{x}{P_{SD}+ (1-P_{SD})P_{SR}}} + \frac{x}{P_{SD}+ (1-P_{SD})P_{SR}} \left[P_{RD} - (1-P_{SD})P_{SR} \right].
\end{equation}

Suppose that $q_{R}^* \in (0,\delta_R)$, then
\begin{equation}\label{eq:p2cr1}
(1-\delta_R)^2 \left[ P_{SD}+ (1-P_{SD})P_{SR} \right] < x <  \left[ P_{SD}+ (1-P_{SD})P_{SR} \right],
\end{equation}

but since $x < q_S (1-q_R)  \left[ P_{SD}+ (1-P_{SD})P_{SR} \right]$ and $q_S \in [0,\delta_S]$ then
\begin{equation}\label{eq:p2cr2}
x \leq \delta_S ^ 2  \left[ P_{SD}+ (1-P_{SD})P_{SR} \right].
\end{equation}

We need to find the intersection of (\ref{eq:p2cr1}) and (\ref{eq:p2cr2}). If $\delta_S + \delta_R < 1$ the intersection
is an empty set, if $\delta_S + \delta_R \geq 1$ then
\begin{equation}\label{eq:p2cr3}
(1-\delta_R)^2 \left[ P_{SD}+ (1-P_{SD})P_{SR} \right] < x \leq \delta_S ^2 \left[ P_{SD}+ (1-P_{SD})P_{SR} \right].
\end{equation}

Suppose that the $q_R^*=0$ or $q_R^*=\delta_R$, which is the case that $x$ lies outside (\ref{eq:p2cr3}).
If $x$ lies outside (\ref{eq:p2cr3}) on the right side then $\frac{dy}{dq_R}$ is always non-positive and $y$ is a non-increasing function of $q_R$,
thus $q_R^*=0$ and $y^* < 0$.

On the other hand if $x$ is on the left side $x \leq (1-\delta_R)^2 \left[ P_{SD}+ (1-P_{SD})P_{SR} \right]$, then $\frac{dy}{dq_R}$ is always non-negative
and thus $y$ is a non-decreasing function $q_R$, hence $q_R^*=\delta_R$ and the maximum objective function is
\begin{equation}
y^*= \delta_R P_{RD} -\frac{(1-P_{SD})P_{SR}}{P_{SD}+ (1-P_{SD})P_{SR}}x -\frac{\delta_R P_{RD}}{(1-\delta_R)\left[ P_{SD}+ (1-P_{SD})P_{SR} \right]}x,
\end{equation}
\begin{equation}
\text{for  } x \leq (1-\delta_R)^2 \left[ P_{SD}+ (1-P_{SD})P_{SR} \right].
\end{equation}

The initial constraint of the $[P_2]$ should also met for $q_R^*=\delta_R$, we
have an additional condition $x < \delta_S (1-\delta_R) \left[ P_{SD}+ (1-P_{SD})P_{SR} \right]$.

Summarizing, the closure obtained by $[P_2]$,
If $\delta_S + \delta_R < 1$ then
\begin{equation}
y^*= \delta_R P_{RD} -\frac{(1-P_{SD})P_{SR}}{P_{SD}+ (1-P_{SD})P_{SR}}x -\frac{\delta_R P_{RD}}{(1-\delta_R)\left[ P_{SD}+ (1-P_{SD})P_{SR} \right]}x
\end{equation}
for
\begin{equation}
x < \delta_S (1-\delta_R) \left[ P_{SD}+ (1-P_{SD})P_{SR} \right].
\end{equation}

If $\delta_S + \delta_R \geq 1$ then
\begin{displaymath}
   y^* = \left\{
     \begin{array}{lr}
      P_{RD} - 2 P_{RD} \sqrt{\frac{x}{P_{SD}+ (1-P_{SD})P_{SR}}} + \frac{x}{P_{SD}+ (1-P_{SD})P_{SR}} \left[P_{RD} - (1-P_{SD})P_{SR} \right]  & \text{ if } x_1 < x \leq x_2 \\
      \delta_R P_{RD} -\frac{(1-P_{SD})P_{SR}}{P_{SD}+ (1-P_{SD})P_{SR}}x -\frac{\delta_R P_{RD}}{(1-\delta_R)\left[ P_{SD}+ (1-P_{SD})P_{SR} \right]}x  & \text{ if }  x \leq x_1
     \end{array}
   \right.
\end{displaymath}

where $x_1=(1-\delta_R)^2 \left[ P_{SD}+ (1-P_{SD})P_{SR} \right]$ and $x_2=\delta_S ^2 \left[ P_{SD}+ (1-P_{SD})P_{SR} \right]$.

Following the same methodology we obtain the solution to the $[P_1]$.
If $\delta_S + \delta_R < 1$ then the closure is
\begin{equation}
\frac{x}{\delta_S \left[ P_{SD}+ (1-P_{SD})P_{SR} \right]} +\frac{(1-P_{SD})P_{SR}x}{(1-\delta_S)P_{RD}\left[ P_{SD}+ (1-P_{SD})P_{SR} \right]} + \frac{y}{(1-\delta_S)P_{RD}} =1,
\end{equation}

for $\frac{(1-P_{SD})P_{SR}}{P_{SD}+ (1-P_{SD})P_{SR}}x+y < P_{RD} (1-\delta_S)\delta_R$.

If $\delta_S + \delta_R \geq 1$ then
if $P_{RD} (1-\delta_S)^2 < \frac{(1-P_{SD})P_{SR}}{P_{SD}+ (1-P_{SD})P_{SR}}x+y \leq P_{RD} \delta_R^2$ the closure is
\begin{equation}
\sqrt{\frac{x}{P_{SD}+ (1-P_{SD})P_{SR}}} + \sqrt{\frac{P_{SR}(1-P_{SD})x}{P_{RD}\left[ P_{SD}+ (1-P_{SD})P_{SR} \right]}+\frac{y}{P_{RD}}} = 1
\end{equation}

if $\frac{(1-P_{SD})P_{SR}}{P_{SD}+ (1-P_{SD})P_{SR}}x+y \leq P_{RD} (1-\delta_S)^2$ the closure is
\begin{equation}
\frac{x}{\delta_S \left[ P_{SD}+ (1-P_{SD})P_{SR} \right]} +\frac{(1-P_{SD})P_{SR}x}{(1-\delta_S)P_{RD}\left[ P_{SD}+ (1-P_{SD})P_{SR} \right]} + \frac{y}{(1-\delta_S)P_{RD}} =1.
\end{equation}

In the previous sections, we obtained the closure of the outer bound of the stability region, which we show below that this closure is achievable. The sufficient conditions for stability for fixed transmission probabilities are given in Proposition \ref{prop:sufficient_conditions}. For $q_S \leq \delta_S$ and $q_R \leq \delta_R$ we have that the saturated throughput for the source and the relay are given by
\begin{equation} \label{eq:mss_qsqr}
\mu_{S}^{s} = q_S (1 - q_R) \left[p_{SD}+(1-p_{SD})p_{SR} \right],
\end{equation}
\begin{equation} \label{eq:mrs_qsqr}
\mu_{R}^{s} = q_R (1 - q_S) p_{RD}.
\end{equation}

From (\ref{eq:mss_qsqr}) we obtain that
\begin{equation} \label{eq:qS_ms}
q_S = \frac{\mu_{S}^{s}}{(1 - q_R ) \left[p_{SD}+(1-p_{SD})p_{SR} \right]}.
\end{equation}
By substituting (\ref{eq:qS_ms}) into (\ref{eq:mrs_qsqr}) we have
\begin{equation} 
\mu_{R}^{s} = q_R \left[ 1 - \frac{\mu_{S}^{s}}{(1 - q_R ) \left[p_{SD}+(1-p_{SD})p_{SR} \right]} \right] p_{RD}.
\end{equation}

After replacing $\mu_{R}^{s}$ with $\lambda_{R,total}=\lambda_{R}+ \frac{(1-p_{SD})p_{SR}} {p_{SD}+(1-p_{SD})p_{SR}}\lambda_{S}$ and $\mu_{S}^{s}$ with $\lambda_S$ then it is identical with the expression in (\ref{eq:P2}) which corresponds to the outer bound. As a result, a point of the saturated throughput can be controlled to any point on the boundary of $\mathcal{R}$ which is given in Proposition \ref{prop:necessary_conditions}. The same argument holds for $\mu_{S}^{s}$ and the expression in (\ref{eq:P1}).

The previous concludes that the closure of the outer bound of the stability region is achievable.

\section{Conclusions} \label{sec:ECOOP_conclusion}
In this paper, we studied the effect of energy constraints on a relay-aided wireless network in which both source and relay have energy harvesting capabilities. The source and the relay nodes also have external arrivals and network-level cooperation is employed, i.e. the relay forwards a fraction of the source's traffic to the destination. 

We derived necessary and sufficient conditions for stability of the above cooperative communication scenario and we also obtained the exact maximum stable throughput region. Interestingly, the closure of the inner and the outer bound is identical. A key insight of this work with an impact on the design of relay-assisted networks with energy limitations is as follows: when the aggregate charging rate is above one and both the source and the relay lie in the intermediate 
traffic regime, then the system has identical performance with the network without energy constraints. Otherwise stated, in the above setting, the energy
limitations are transparent to the network operation, which is also demonstrated by the non-linear behavior of the bound of the maximum stable throughput region.

This work provides a step in connecting information theory and networking by studying the stable throughput region metric. Additionally, it sheds light on the relatively unexplored and important domain of energy harvesting and assesses the effect of that on this important measure.
 
Future work will include the characterization of the stable throughput region using multi-packet reception instead of the erasure channel with collisions. 

\bibliographystyle{IEEEtran}
\bibliography{thesis}

\end{document}